\newcommand{\IO}[2]{\Input{\footnotesize #1} \Output{\footnotesize #2}\BlankLine\BlankLine }
\newcommand\Oh{{\mathcal{O}}}
\newcommand\vTColl{\mathcal{T}}	\newcommand\vT{\vTColl}
\newcommand\vsTColl{\mathit{n}}	\newcommand\vn{\vsTColl}
\newcommand\vPat{\mathit{P}}	\newcommand\vP{\vPat}
\newcommand\vsPat{\mathit{m}}	\newcommand\vm{\vsPat}
\newcommand\occ{\mathit{occ}}
\newcommand\vSPosPat{\mathit{sp}}     \newcommand\vsp{\vSPosPat}
\newcommand\vEPosPat{\mathit{ep}}     \newcommand\vep{\vEPosPat}
\newcommand\vnBWTRuns{\mathit{r}}	\newcommand\vr{\vnBWTRuns}
\newcommand\vSamp{\mathit{t}'}
\newcommand\vt{\mathit{t}}
\newcommand\vSampFac{s}			\newcommand\vs{\vSampFac}
\newcommand\SA{\mathsf{SA}}
\newcommand\CSA{\mathsf{CSA}}
\newcommand\FMIdx{\textsf{fm-index}}
\newcommand\RLCSA{\mathsf{rlcsa}}
\newcommand\RLFMIdx{\textsf{rlfm-index}}
\newcommand\RIdx{\textsf{\emph{r}-index}}
\newcommand\SRIdx{\textsf{\emph{sr}-index}}
\newcommand\LZIdx{\textsf{lz-index}}
\newcommand\GIdx{\textsf{\emph{g}-index}}
\newcommand\HybridIdx{\textsf{hyb-index}}
\newcommand\LZEndIdx{\textsf{lze-index}}
\newcommand\BWT{\mathsf{BWT}}
\newcommand\LF{\mathsf{LF}}
\DeclareMathOperator{\ftSearch}{t_{search}(\vsPat)}
\DeclareMathOperator{\ftLookup}{t_{lookup}(\vsTColl)}
\newcommand\nameDS{\texttt}
\newcommand\nameOp{\texttt}
\newcommand\rank{\nameOp{rank}}
\newcommand\select{\nameOp{select}}
\newcommand\pred{\nameOp{pred}}
\newcommand\Start{\nameDS{Start}}
\newcommand\Letter{\nameDS{Letter}}
\newcommand\First{\nameDS{First}}
\newcommand\FirstToRun{\nameDS{FirstToRun}}
\newcommand\Samples{\nameDS{Samples}}
\newcommand\Valid{\nameDS{Valid}}
\newcommand\ValidArea{\nameDS{ValidArea}}
\newcommand\Removed{\nameDS{Removed}}
\title{A Fast and Small Subsampled R-index}
\author{Dustin Cobas}{CeBiB --- Center for Biotechnology and Bioengineering, Chile \and Dept. of Computer Science, University of Chile, Chile}{dustin.cobas@gmail.com}{https://orcid.org/0000-0001-6081-694X}{ANID/Scholarship Program/DOCTORADO BECAS CHILE/2020-21200906, Chile.}
\author{Travis Gagie}{CeBiB --- Center for Biotechnology and Bioengineering, Chile \and Dalhousie University, Canada}{travis.gagie@gmail.com}{https://orcid.org/0000-0003-3689-327X}{Funded by NSERC Discovery Grant RGPIN-07185-2020.}
\author{Gonzalo Navarro}{CeBiB --- Center for Biotechnology and Bioengineering, Chile \and Dept. of Computer Science, University of Chile, Chile \and \url{http://www.dcc.uchile.cl/gnavarro} }{gnavarro@dcc.uchile.cl}{https://orcid.org/0000-0002-2286-741X}{Fondecyt grant 1-200038, ANID, Chile.}
\authorrunning{D. Cobas, T. Gagie, and G. Navarro} 
\keywords{Pattern matching, r-index, compressed text indexing, repetitive text collections} 
\begin{document}

\maketitle

\begin{abstract}
The $\RIdx$ (Gagie et al., JACM 2020) represented a breakthrough in compressed indexing of repetitive text collections, outperforming its alternatives by orders of magnitude. Its space usage, $\Oh(r)$ where $r$ is the number of runs in the Burrows-Wheeler Transform of the text, is however larger than Lempel-Ziv and grammar-based indexes, and makes it uninteresting in various real-life scenarios of milder repetitiveness. In this paper we introduce the $\SRIdx$, a variant that limits the space to $\Oh(\min(r,n/s))$ for a text of length $n$ and a given parameter $s$, at the expense of multiplying by $s$ the time per occurrence reported. The $\SRIdx$ is obtained by carefully subsampling the text positions indexed by the $\RIdx$, in a way that we prove is still able to support pattern matching with guaranteed performance. Our experiments demonstrate that the $\SRIdx$ sharply outperforms virtually every other compressed index on repetitive texts, both in time and space, even matching the performance of the $\RIdx$ while using 1.5--3.0 times less space.  Only some Lempel-Ziv-based indexes achieve better compression than the $\SRIdx$, using about half the space, but they are an order of magnitude slower.
\end{abstract}

\section{Introduction} \label{sec:intro}

The rapid surge of massive repetitive text collections, like genome and sequence read sets and versioned document and software repositories, has raised the interest in text indexing techniques that exploit repetitiveness to obtain orders-of-magnitude space reductions, while supporting pattern matching directly on the compressed text representations \cite{GNencyc18,Nav20}.

Traditional compressed indexes rely on statistical compression \cite{NavarroM07:CFT}, but this is ineffective to capture repetitiveness \cite{KreftN13:CIR}. A new wave of repetitiveness-aware indexes \cite{Nav20} build on other compression mechanisms like Lempel-Ziv \cite{LZ76} or grammar compression \cite{KY00}. A particularly useful index of this kind is the $\RLFMIdx$ \cite{MakinenN05:SSA,MakinenNSV10:SRH}, because it emulates the classical suffix array \cite{ManberM93:SAN} and this simplifies translating suffix-array based algorithms to run on it \cite{MBCT15}.

The $\RLFMIdx$ represents the 
Burrows-Wheeler Transform ($\BWT$) \cite{BurrowsW94:BSL} of the text in run-length compressed form, because the number $r$ of maximal equal-letter runs in the $\BWT$ is known to be small on repetitive texts. A problem with the $\RLFMIdx$ is that, although it can count the number of occurrences of a pattern using $\Oh(\vr)$ space, it needs to sample the text at every $\vs$th position, for a parameter $\vs$, in order to locate each of those occurrences in time proportional to $\vs$. The $\Oh(\vn/\vs)$ additional space incurred on a text of length $\vn$ ruins the compression on very repetitive collections, where $\vr \ll \vn$.  
The recent $\RIdx$ \cite{GagieNP20:FFS} closed the long-standing problem of efficiently locating the occurrences within $\Oh(\vnBWTRuns)$ space, offering pattern matching time orders of magnitude faster than previous repetitiveness-aware indexes.

In terms of space, however, the $\RIdx$ is considerably larger
than Lempel-Ziv based indexes of size $\Oh(z)$, where $z$ is the number of phrases in the Lempel-Ziv parse. Gagie et al.~\cite{GagieNP20:FFS} show that, on extremely repetitive text collections where $n/r = 500$--$10{,}000$, $r$ is around $3z$ and the $\RIdx$ size is $0.06$--$0.2$ bits per symbol (bps), about twice that of the $\LZIdx$ \cite{KreftN13:CIR}, a baseline Lempel-Ziv based index. However, $r$ degrades faster than $z$ as repetitiveness drops: in an experiment on bacterial genomes in the same article, where $n/r \approx 100$, the $\RIdx$ space approaches $0.9$ bps, $4$ times that of the $\LZIdx$; $r$ also approaches $4z$. 
Experiments on other datasets show that the $\RIdx$ tends to be considerably larger \cite{NS19,CNP21,DN21,BCGHMNR21}.\footnote{The $n/r$ measurements in the article \cite{BCGHMNR21} are not correct.}
Indeed, in some realistic cases $n/r$ can be over 1{,}500, but in most cases it is well below: 40--160 on versioned software and document collections and fully assembled human chromosomes, 7.5--50 on virus and bacterial genomes (with $r$ in the range $4z$--$7z$), and 4--9 on sequencing reads; see Section~\ref{sec:result}. 
An $\RIdx$ on such a small $n/r$ ratio easily becomes larger than the plain sequence data.

In this paper we tackle the problem of the (relatively) large space usage of the $\RIdx$. This index manages to locate the pattern occurrences by sampling $\vr$ text positions (corresponding to the ends of $\BWT$ runs). We show that one can remove some carefully chosen samples so that, given a parameter $s$, the index stores only $\Oh(\min(r,n/s))$ samples while its locating machinery can still be used to guarantee that every pattern occurrence is located within $\Oh(s)$ steps. We call the resulting index the {\em subsampled $\RIdx$}, or $\SRIdx$. The worst-case time to locate the $\occ$ occurrences of a pattern of length $\vm$ on an alphabet of size $\sigma$ then rises from $\Oh((\vm+\occ)\log(\sigma+n/r))$ in the implemented $\RIdx$ to $\Oh((\vm+\vs \cdot \occ)\log(\sigma+n/r))$ in the $\SRIdx$, which matches the search cost of the $\RLFMIdx$. 

The $\SRIdx$ can then be seen as a hybrid between the $\RIdx$ (matching it when $s=1$) and the $\RLFMIdx$ (obtaining its time with less space; the spaces become similar when repetitiveness drops). In practice, however, the $\SRIdx$ performs much better than both on repetitive texts, sharply dominating the $\RLFMIdx$, the best grammar-based index \cite{CNP21}, and in most cases the $\LZIdx$, both in space and time.
The $\SRIdx$ can also get as fast as the $\RIdx$ while using $1.5$--$4.0$ times less space. 
Its only remaining competitor is a hybrid between a Lempel-Ziv based and a statistical index \cite{FKP18}. This index can use up to half the space of the $\SRIdx$, but it is an order of magnitude slower. Overall, the $\SRIdx$ stays orders of magnitude faster than all the alternatives while using practical amounts of space in a wide range of repetitiveness scenarios.


\section{Background} \label{sec:background}


The {\em suffix array} \cite{ManberM93:SAN} $\SA[1..\vsTColl]$ of a string $\vTColl[1..\vsTColl]$ over alphabet $[1..\sigma]$ is a permutation of the starting positions of all the suffixes of $\vTColl$ in lexicographic order, $\vTColl[\SA[i].. \vsTColl] < \vTColl[\SA[i + 1].. \vsTColl]$ for all $1 \leq i < \vsTColl$. The suffix array
can be binary searched in time $\Oh(m\log n)$ to obtain the range $\SA[\vSPosPat.. \vEPosPat]$ of all the suffixes prefixed by a search pattern $\vPat[1..\vm]$ (which then occurs $\occ = \vEPosPat - \vSPosPat + 1$ times in $\vT$). Once they are {\em counted} (i.e., their suffix array range is determined), those occurrences are {\em located} in time $\Oh(\occ)$ by simply listing their starting positions, $\SA[\vSPosPat],\ldots, \SA[\vEPosPat]$.
The suffix array can then be stored in $\vsTColl \lceil \lg \vsTColl \rceil$ bits (plus the $n\lceil\lg \sigma\rceil$ bits to store $\vT$) and searches for $\vP$ in $\vT$ in total time $\Oh(m\log n + \occ)$.

\emph{Compressed suffix arrays} ($\CSA$s) \cite{NavarroM07:CFT} are space-efficient representations of both the suffix array ($\SA$) and the text ($\vT$).
They can find the interval $\SA[\vSPosPat..\vEPosPat]$ corresponding to $\vPat[1..\vsPat]$ in time $\ftSearch$ and access any cell $\SA[j]$ in time $\ftLookup$, so they can be used to search for $\vPat$ in time $\Oh(\ftSearch + \occ \ftLookup)$. 
Most $\CSA$s need to store sampled $\SA$ values to compute any $\SA[j]$ in order to support the locate operation, inducing the tradeoff of using $\Oh((\vsTColl / s)\log n)$ extra bits to obtain time $\ftLookup$ proportional to a parameter $s$.

The \emph{Burrows-Wheeler Transform} \cite{BurrowsW94:BSL} of $\vTColl$ is a permutation $\BWT[1..\vsTColl]$ of $\vTColl[1..\vsTColl]$ defined as $\BWT[i]=\vTColl[\SA[i]-1]$ (and $\vTColl[\vsTColl]$ if $\SA[i]=1$), which boosts the compressibility of $\vTColl$. The $\FMIdx$ \cite{FerraginaM05:ICT, FerraginaMMN07:CRS} is a $\CSA$ that represents $\SA$ and $\vT$ within the \emph{statistical entropy} of $\vT$, by exploiting the connection between the $\BWT$ and $\SA$.
For counting, the $\FMIdx$ resorts to \emph{backward search}, which successively finds the suffix array ranges $\SA[\vsp_i..\vep_i]$ of $\vP[i..\vm]$, for $i=\vm$ to $1$, starting from $\SA[\vsp_{\vm+1}..\vep_{\vm+1}] = [1..\vn]$ and then
\begin{eqnarray*}
\vsp_i &=& C[c] + \rank_c(\BWT,\vsp_{i+1}-1)+1, \\
\vep_i &=& C[c] + \rank_c(\BWT,\vep_{i+1}),
\end{eqnarray*}
where $c=\vP[i]$, $C[c]$ is the number of occurrences of symbols smaller than $c$ in $\vT$, and $\rank_c(\BWT,j)$ is the number of times $c$ occurs in $\BWT[1..j]$. Thus, $[\vsp,\vep]=[\vsp_1,\vep_1]$ if $\vsp_i \le \vep_i$ holds for all $1 \le i \le m$.

For locating the occurrences $\SA[\vSPosPat],\ldots, \SA[\vEPosPat]$, the $\FMIdx$ uses $\SA$ sampling as described: it stores sampled values of $\SA$ at regularly spaced text positions, say multiples of $s$. This is done via the so-called {\em LF-steps}: The $\BWT$ allows one to efficiently compute, given $j$ such that $\SA[j]=i$, the value $j'$ such that $\SA[j']=i-1$, called $j'=\LF(j)$. The formula is
\[ \LF(i) ~=~ C[c] + \rank_c(\BWT,i),
\]
where $c=\BWT[i]$.
Note that the LF-steps virtually traverse the text backwards. By marking with $1$s in a bitvector $B[1..\vn]$ 
the positions $j^*$ such that $\SA[j^*]$ is a multiple of $s$, we can start
from any $j$ and, in $k < s$ LF-steps, find some sampled position $j^*=\LF^k(j)$
where $B[j^*]=1$. By storing those values $\SA[j^*]$ explicitly, we have $\SA[j]=\SA[j^*]+k$.

By implementing $\BWT$ with a wavelet tree, for example, access and $\rank_c$ on $\BWT$ can be supported in time $\Oh(\log\sigma)$, and the $\FMIdx$ searches in time $\Oh((m+s\cdot occ)\log\sigma)$ \cite{FerraginaMMN07:CRS}.

Since the statistical entropy is insensitive to repetitiveness \cite{KreftN13:CIR}, however, the $\FMIdx$ is not adequate for repetitive datasets.
The \emph{Run-Length FM-index}, $\RLFMIdx$ (and its variant $\RLCSA$) \cite{MakinenN05:SSA, MakinenNSV10:SRH}, is a modification of the $\FMIdx$ aimed at repetitive texts. Say that the $\BWT[1..\vsTColl]$ is formed by $\vnBWTRuns$ maximal runs of equal symbols, then $\vnBWTRuns$ is relatively small in repetitive collections 
(in particular, $\vnBWTRuns = \Oh(z\log^2 \vsTColl)$, where $z$ is the number 
of phrases of the Lempel-Ziv parse of $\vTColl$ \cite{KK19}). The
$\RLFMIdx$ supports counting within $\Oh(\vnBWTRuns\log n)$ bits, by implementing the backward search over alternative data structures. In particular, it marks in a bitvector $\Start[1..\vn]$ with $1$s the positions $j$ starting $\BWT$ runs, that is, where $j=1$ or $\BWT[j] \not= \BWT[j-1]$. The first letter of each run is collected in an array $\Letter[1..\vr]$. Since $\Start$ has only $r$ $1$s, it can be represented within $\vr\lg(\vn/\vr)+\Oh(\vr)$ bits. Within this space, one can access any bit $\Start[j]$ and support operation $\rank_1(\Start,j)$, which counts the number of $1$s in $\Start[1..j]$, in time $\Oh(\log(\vn/\vr))$ \cite{OS07}. Therefore, we simulate $\BWT[j]=\Letter[\rank_1(\Start,j)]$ in $\Oh(\vr\log n)$ bits. The backward search formula can be efficiently simulated as well, leading to $\Oh((m+s\cdot occ)\log(\sigma+n/r))$ search time.
However, the $\RLFMIdx$ still uses $\SA$ samples to locate, and when $\vnBWTRuns \ll \vsTColl$ (i.e., on repetitive texts), the $\Oh((\vsTColl / s)\log \vn)$ added bits ruin the $\Oh(\vr\log\vn)$-bit space ($s$ is typically $\Oh(\log n)$ or close).

The $\RIdx$ \cite{GagieNP20:FFS} closed the long-standing problem of efficiently locating the occurrences of a pattern in a text using $\Oh(\vnBWTRuns\log n)$-bit space.
The experiments showed that the $\RIdx$ outperforms all the other implemented indexes by orders of magnitude in space or in time to locate pattern occurrences on highly repetitive datasets. However, other experiments on more typical repetitiveness scenarios \cite{NS19,CNP21,DN21,BCGHMNR21} showed that the space of the $\RIdx$ degrades very quickly as repetitiveness decreases. For example, a grammar-based index (which can be of size $g = \Oh(z\log(\vsTColl/z))$) is usually slower but significantly smaller \cite{CNP21}, and an even slower Lempel-Ziv based index of size $O(z)$ \cite{KreftN13:CIR} is even smaller. Some later proposals \cite{NT20} further speed up the $\RIdx$ by increasing the constant accompanying the $\Oh(\vr\log\vn)$-bit space. The unmatched time performance of the $\RIdx$ comes then with a very high price in space on all but the most highly repetitive text collections, which makes it of little use in many relevant application scenarios. This is the problem we address in this paper.

\section{The $\RIdx$ Sampling Mechanism} \label{sec:rindex}

Gagie et al.~\cite{GagieNP20:FFS} provide an $\Oh(\vnBWTRuns\log\vn)$-bits data
structure that not only finds the range $\SA[\vSPosPat.. \vEPosPat]$ of the
occurrences of $\vPat$ in $\vTColl$, but also gives the value
$\SA[\vEPosPat]$, that is, the text position of the last occurrence in the 
range. They then provide a second $\Oh(\vnBWTRuns\log n)$-bits data structure that,
given $\SA[j]$, efficiently finds $\SA[j-1]$. This suffices to efficiently find all the occurrences of $\vPat$, in time $\Oh((m+occ)\log\log(\sigma+n/r))$ in their theoretical version.

In addition to the theoretical design, Gagie et al.\ and Boucher et al.~\cite{GagieNP20:FFS,BGKLMM19} provided a carefully engineered $\RIdx$ implementation. The counting data structures (which  
find the range $\SA[\vSPosPat.. \vEPosPat]$) require, for any small constant 
$\epsilon>0$, $\vr \cdot ((1+\epsilon)\lg(\vn/\vr)+\lg\sigma+\Oh(1))$ bits
(largely dominated by the described arrays $\Start$ and $\Letter$), whereas the 
locating data structures (which obtain $\SA[\vEPosPat]$, and $\SA[j-1]$ given $\SA[j]$), require
$\vr \cdot (2\lg \vn + \Oh(1))$ further bits. The locating structures are then significantly
heavier in practice, especially when $\vn/\vr$ is not that large. 
Together, the structures use $\vr \cdot ((1+\epsilon)\lg(\vn/\vr)+2\lg \vn + \lg\sigma+\Oh(1))$ bits of space and perform backward search steps and LF-steps in time $\Oh(\log(\sigma+\vn/\vr))$, so they search for $\vP$ in time
$\Oh((\vm+\occ)\log(\sigma+\vn/\vr))$.

For conciseness we do not describe the counting data structures of the $\RIdx$, which are the same of the $\RLFMIdx$ and which we do not modify in our index. The $\RIdx$ locating structures, which we do modify, are formed by the following components:

\begin{description}
\item[{$\First[1..\vsTColl]$}:] a bitvector marking with $1$s the {\em text} positions of the letters that are the first in a $\BWT$ run. That is, if $j=1$ or $\BWT[j] \not= \BWT[j-1]$, then $\First[\SA[j]-1]=1$. 
Since $\First$ has only $\vr$ $1$s, it is represented in compressed form using $\vr\lg(\vn/\vr)+\Oh(\vr)$ bits, while supporting $\rank_1$ in time $\Oh(\log(\vn/\vr))$ and, in $\Oh(1)$ time, the operation
$\select_1(\First,j)$ (the position of the $j$th $1$ in $\First$) \cite{OS07}. This
allows one find the rightmost $1$ up to position $i$, $\pred(\First,i)=\select_1(\First,\rank_1(\First,i))$.
\item[{$\FirstToRun[1..\vnBWTRuns]$}:] a vector of integers 
(using $\vr\lceil \lg \vnBWTRuns \rceil$ bits) mapping each letter marked in $\First$ to the $\BWT$ run where it lies. That is, if the
$p$th $\BWT$ run starts at $\BWT[j]$, and $\First[i]=1$ for
$i=\SA[j]-1$, then $\FirstToRun[\rank_1(\First,i)]=p$.
\item[{$\Samples[1..\vnBWTRuns]$}:] a vector of $\lceil \lg \vn\rceil$-bit integers storing samples
of $\SA$, so that $\Samples[p]$ is the text position $\SA[j]-1$ corresponding to the last letter $\BWT[j]$ in the $p$th $\BWT$ run. 
\end{description}

These structures are used in the following way in the $\RIdx$ implementation \cite{GagieNP20:FFS}:

\begin{description}
\item[Problem 1:] When computing the ranges $\SA[\vSPosPat.. \vEPosPat]$ along the backward search, we must also produce the value $\SA[\vEPosPat]$. They actually compute all the values $\SA[\vep_i]$. This is stored for $\SA[\vep_{\vm+1}]=\SA[n]$ and then, if $\BWT[\vep_{i+1}]=\vP[i]$, we know that $\vep_i=\LF(\vep_{i+1})$ and thus $\SA[\vep_i] = \SA[\vep_{i+1}]-1$. Otherwise, $\vep_i=\LF(j)$ and $\SA[\vep_i]=\SA[j]-1$, where $j \in [\vsp_{i+1}..\vep_{i+1}]$ is the largest position with $\BWT[j]=\vP[i]$. The position $j$ is efficiently found with their counting data structures, and the remaining problem is how to compute $\SA[j]$. Since $j$ must be an end of run, however, this is simply computed as $\Samples[p]+1$, where 
$p=\rank_1(\Start,j)$ is the run where $j$ belongs.
\item [Problem 2:] When locating we must find $\SA[j-1]$ from $i=\SA[j]-1$. There are two cases: \begin{itemize}
    \item $j-1$ ends a $\BWT$ run, that is, $\Start[j]=1$, and then $\SA[j-1]=\Samples[p-1]+1$, where $p$ is as in Problem 1;
    \item $j-1$ is in the same $\BWT$ run of $j$, in which case they compute $\SA[j-1]=\phi(i)$, where
\begin{equation} \label{eq:rindex}
\phi(i) = \Samples[\FirstToRun[\rank_1(\First,i)]-1]+1+(i-\pred(\First,i)).
\end{equation}
\end{itemize}
\end{description}

\begin{figure}[t]
\begin{center}
\includegraphics[width=\textwidth]{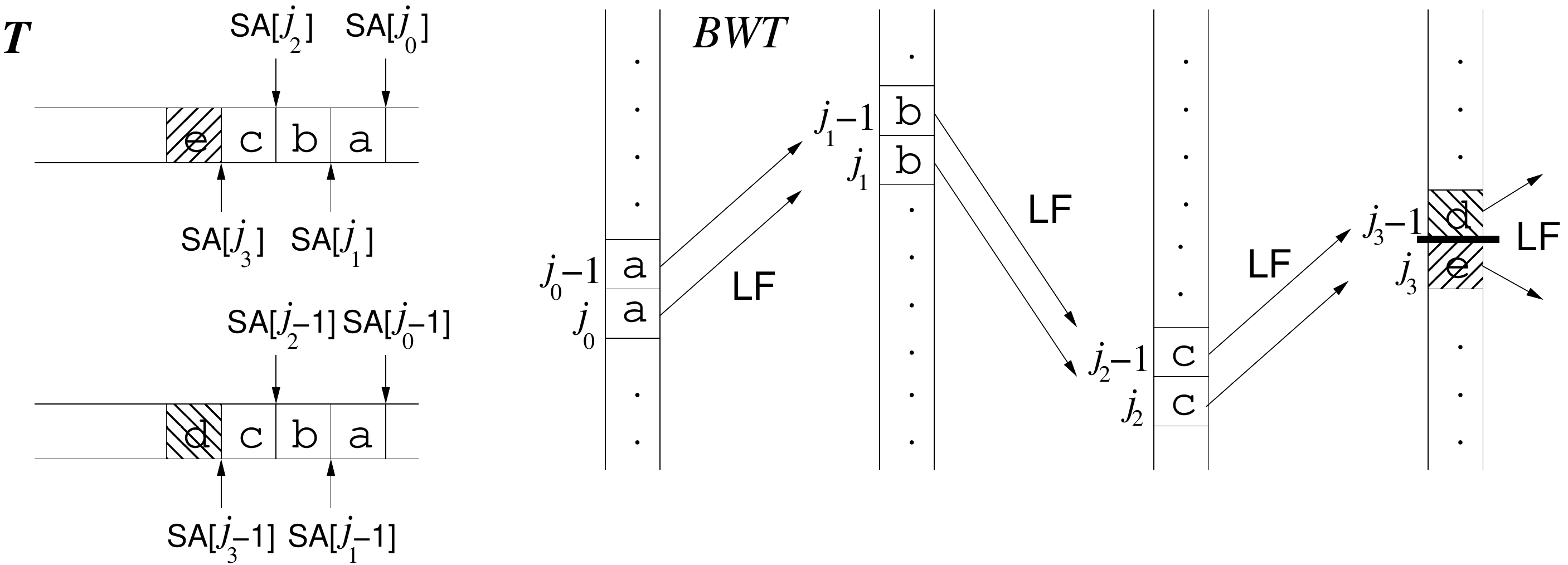}
\end{center}
\caption{Schematic example of the sampling mechanism of the $\RIdx$. There is
a run border between $j_3-1$ and $j_3$.}
\label{fig:figura}
\end{figure}

This formula works because, when $j$ and $j-1$ are in the same $\BWT$ run,
it holds that $\LF(j-1)=\LF(j)-1$ \cite{FerraginaM05:ICT}. 
Figure~\ref{fig:figura} explains why this property makes the formula work. 
Consider two $\BWT$
positions, $j=j_0$ and $j' = j-1 = j_0-1$, that belong to the same run. The 
$\LF$ formula will map them to consecutive positions, $j_1$ and $j_1'=j_1-1$.
If $j_1$ and $j_1-1$ still belong to the same run, $\LF$ will map them to
consecutive positions again, $j_2$ and $j_2'=j_2-1$, and once again, $j_3$
and $j_3'=j_3-1$. Say that $j_3$ and $j_3-1$ do not belong to the same run.
This means that $j_3-1$ ends a run (and thus it is stored in $\Samples$)
and $j_3$ starts a run (and thus $\SA[j_3]-1$ is marked in $\First$).
To the left of the $\BWT$ positions we show the areas of $\vT$ virtually traversed as
we perform consecutive LF-steps. Therefore, if we know $i=\SA[j]-1=\SA[j_0]-1$,
the nearest $1$ in $\First$ to the left is at $\pred(\First,i)=\SA[j_3]-1$ 
(where there is an \texttt{e} in $\vT$) and $p=\FirstToRun[\rank(i)]$ is 
the number of the $\BWT$ run that starts at $j_3$. If we subtract $1$, we have the
$\BWT$ run ending at $j_3-1$, and then
$\Samples[p-1]$ is the position preceding
$\SA[j_3-1]$ (where there is a \texttt{d} in $\vT$). We add $1+(i-\pred(\First,i))=4$
to obtain $\SA[j_0-1]=\SA[j-1]$.

These components make up, effectively, a sampling mechanism of $\Oh(\vr\log n)$ bits (i.e., sampling the end of runs),
instead of the traditional one of  $\Oh((n/s)\log n)$ bits (i.e., sampling every $s$th text position).

\section{Our Subsampled $\RIdx$}

Despite its good performance on highly repetitive texts, the sampling mechanism introduced by the $\RIdx$ is excessive in areas where the $\BWT$ runs are short, because those induce oversampled ranges on the text. In this section we describe an $\RIdx$ variant we dub {\em subsampled $\RIdx$}, or {\em $\SRIdx$}, which 
can be seen as a hybrid between the $\RIdx$ and the $\RLFMIdx$.
The $\SRIdx$ samples the text at end of runs (like the $\RIdx$), but in oversampled areas it removes some samples to ensure that no three consecutive samples lie within distance $s$ (roughly as in the 
$\RLFMIdx$).
It then handles text areas with denser and sparser sampling in different ways.

\subsection{Subsampling}

The $\SRIdx$ subsampling process removes $\RIdx$ samples in oversampled areas. Concretely, let $\vSamp_1 < \cdots < \vSamp_{\vnBWTRuns}$ be the text positions of the last letters in $\BWT$ runs, that is, the sorted values in array $\Samples$. For any $1 < i < \vr$, we remove the sample $\vSamp_{i}$ if $\vSamp_{i+1} - \vSamp_{i-1} \le \vSampFac$, where $\vs$ is a parameter. This condition is tested and applied sequentially for $i = 2,\ldots,\vr-1$ (that is, if we 
removed $\vSamp_2$ because $\vSamp_3-\vSamp_1 \le \vs$, then we next remove
$\vSamp_3$ if $\vSamp_4-\vSamp_1 \le \vs$; otherwise we remove
$\vSamp_3$ if $\vSamp_4-\vSamp_2 \le \vs$). Let us call $\vt_1,\vt_2,\ldots$ the sequence of the remaining samples.

The arrays $\First$, $\FirstToRun$, and $\Samples$ are built on the 
samples $\vt_i$ only. That is, if we remove the sample $\Samples[p]=\vSamp$, we also remove the $1$ in $\First$ corresponding to the first letter of the $(p+1)$th $\BWT$ run, which is the one Eq.~(\ref{eq:rindex}) would have handled with $\Samples[p]$. We also remove the corresponding entry of $\FirstToRun$. Note that, if $j$ is the first position of the $(p+1)$th run and $j-1$ the last of the $p$th run, then if we remove $\Samples[p]=\SA[j-1]-1$, we remove the corresponding $1$ at position $\SA[j]-1$ in $\First$.

It is not hard to see that subsampling avoids the excessive space usage when $r$ is not small enough, reducing it from $\Oh(\vr)$ to $\Oh(\min(\vr,\vn/\vs))$ entries for the locating structures.

\begin{lemma} \label{lem:space}
The subsampled structures {\em $\First$}, {\em $\FirstToRun$}, and {\em $\Samples$} use 
$\min(\vr,2\lceil \vn/(\vs+1)\rceil) \cdot (2\lg n + \Oh(1))$ bits of space.
\end{lemma}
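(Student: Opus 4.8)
The plan is to bound the number of retained samples $\vt_1, \vt_2, \ldots$ and then multiply by the per-entry cost of the three structures. The trivial bound is that we never add samples, so there are at most $\vr$ of them; the content of the lemma is the other bound, $2\lceil \vn/(\vs+1)\rceil$. The key claim to establish is that \emph{any three consecutive retained samples span distance strictly greater than $\vs$}, i.e.\ $\vt_{k+2} - \vt_k > \vs$ for every $k$. Granting this, I would partition the range $[1..\vn]$ of text positions into consecutive blocks of length $\vs+1$ (there are $\lceil \vn/(\vs+1)\rceil$ of them) and argue that each block can contain at most two retained samples: if a block contained three retained samples $\vt_k < \vt_{k+1} < \vt_{k+2}$, they would all lie within a window of length $\vs+1$, forcing $\vt_{k+2} - \vt_k \le \vs$, contradicting the claim. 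Hence the number of retained samples is at most $2\lceil \vn/(\vs+1)\rceil$, and combining the two bounds gives $\min(\vr, 2\lceil \vn/(\vs+1)\rceil)$ entries. Each entry costs $2\lg n + \Oh(1)$ bits: one $\lg n$-bit word in $\Samples$, one $\lceil\lg \vr\rceil \le \lg n + \Oh(1)$-bit word in $\FirstToRun$, and $\lg(\vn/\vr) + \Oh(1) \le \lg n + \Oh(1)$ amortized bits in the compressed bitvector $\First$ (which has one $1$ per retained sample). Actually $\lg(\vn/\vr)$ should be $\lg(\vn/k)$ where $k$ is the number of retained samples, but this is still at most $\lg n$, so summing over all entries gives the stated $\min(\vr,2\lceil \vn/(\vs+1)\rceil)\cdot(2\lg n + \Oh(1))$ bits.

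The crux is therefore the claim $\vt_{k+2} - \vt_k > \vs$, which I would prove by induction on the sequential removal process, tracking an invariant. When we process index $i$ in the original sequence $\vSamp_1 < \cdots < \vSamp_{\vr}$, the ``current'' previous surviving sample is some $\vSamp_\ell$ with $\ell < i$; we remove $\vSamp_i$ exactly when $\vSamp_{i+1} - \vSamp_\ell \le \vs$. The invariant I want is: after processing index $i$, if $\vSamp_a < \vSamp_b$ are the two largest surviving samples among $\vSamp_1,\ldots,\vSamp_i$, then for the next surviving sample $\vSamp_c$ (with $c > i$) we will have $\vSamp_c - \vSamp_a > \vs$. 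This follows because $\vSamp_c$ is the first index after $b$ that is \emph{not} removed, and $\vSamp_c$ is removed precisely when $\vSamp_c$'s successor... — here I need to be careful about the exact indexing of the removal test, so let me instead phrase it directly: between two consecutive surviving samples $\vt_k$ and $\vt_{k+1}$ all the intervening original samples were removed, and the sample $\vt_{k+1}$ survived because when it was tested, the gap from $\vt_k$ (the then-current previous survivor) to $\vt_{k+1}$'s successor exceeded $\vs$; but more usefully, $\vt_{k+2}$ survived, meaning when $\vt_{k+2}$ was tested, $\vt_{k+2} - (\text{previous survivor at that time})$ was $> \vs$, and the previous survivor at that time is exactly $\vt_{k+1}$ — no wait, it could be $\vt_k$ if $\vt_{k+1}$ came later. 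The clean statement is: $\vt_{k+2}$ is kept, and the removal rule for index $i=$ (original index of $\vt_{k+2}$) compares $\vSamp_{i+1} - \vSamp_\ell$ where $\vSamp_\ell$ is the last survivor before index $i$, which is $\vt_{k+1}$; since $\vt_{k+2}$ is \emph{not} removed... hmm, the rule as stated removes $\vSamp_i$ based on $\vSamp_{i-1}$ and $\vSamp_{i+1}$ in the running sense. I expect this bookkeeping — getting the indices in the sequential removal exactly right — to be the main obstacle; the geometric pigeonhole argument afterwards is routine.

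Once the claim is nailed down, the rest is short. I would first state and prove the claim as an auxiliary observation (perhaps inline), then do the block-partition counting, then tally the per-entry bit costs of $\First$, $\FirstToRun$, and $\Samples$. One minor subtlety worth a sentence: the endpoints $\vt_1$ and the last retained sample are never candidates for removal ($i$ ranges over $2,\ldots,\vr-1$), so the claim as needed only concerns interior triples, and the block-counting bound $2\lceil \vn/(\vs+1)\rceil$ already accounts generously for the two unremovable endpoints. I would also remark that if $\vr \le 2\lceil \vn/(\vs+1)\rceil$ then no effective subsampling pressure applies and the $\min$ is achieved by $\vr$ trivially, so the statement holds in both regimes.
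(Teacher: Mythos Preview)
Your approach is exactly the paper's: establish $\vt_{k+2}-\vt_k > \vs$ for every $k$, then partition $[1..\vn]$ into blocks of length $\vs+1$ so that each block holds at most two retained samples; the paper's proof is literally those two observations plus the trivial $\le \vr$ bound, and your per-entry bit accounting matches how the paper sizes the $\RIdx$ locating structures.

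Where you get tangled---the bookkeeping for the key claim---the fix is to look at the \emph{middle} survivor $\vt_{k+1}$, not at $\vt_{k+2}$. Let $a<b<c$ be the original indices of $\vt_k,\vt_{k+1},\vt_{k+2}$. Then $2\le b\le \vr-1$, so $\vSamp_b$ was actually tested by the sequential rule. At that moment the nearest surviving predecessor was $\vSamp_a=\vt_k$ (everything between $a$ and $b$ had already been removed), and the test compared $\vSamp_{b+1}-\vSamp_a$ to $\vs$. Since $\vSamp_b$ was \emph{kept}, we have $\vSamp_{b+1}-\vSamp_a>\vs$; and since $c\ge b+1$, it follows that $\vt_{k+2}=\vSamp_c\ge\vSamp_{b+1}$, hence $\vt_{k+2}-\vt_k>\vs$. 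Your attempted route through the test on $\vt_{k+2}$ cannot close, because that test involves $\vSamp_{c+1}$, which overshoots the quantity you need.
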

\begin{proof}
This is their same space as in the implemented $\RIdx$, with the number of samples reduced from $\vr$ to $\min(\vr,2\lceil \vn/(\vs+1)\rceil)$. We start with $\vr$ samples and remove some, so there are at most $r$. By construction, any remaining sample $\vt_i$ satisfies
$\vt_{i+1} - \vt_{i-1} > \vSampFac$, so if we cut the text into blocks of length $s+1$, no block can contain more than $2$ samples.
\end{proof}

Our index adds the following small structure on top of the above ones, so as to mark the removed samples:

\begin{description}
\item[{$\Removed[1..r]$}:] A bitvector telling which of the original samples
have been removed, that is, $\Removed[p]=1$ iff the sample at the end of the $p$th
$\BWT$ run was removed. We can compute any $\rank_1(\Removed,p)$ in constant time using $r+o(r)$ bits \cite{Cla96}.
\end{description}

It is easy to see that, once the $\RIdx$ structures are built, the $\SRIdx$ subsampling, as well as building and updating the associated structures, are lightweight tasks, easily carried out in $\Oh(r)$ space and $\Oh(r\log r)$ time. It is also possible to build the subsampled structures directly without building the full $\SRIdx$ sampling first, in $\Oh(\vn\log(\sigma+\vn/\vr))$ time: we simulate a backward text traversal using $\LF$-steps, so that we can build bitvector $\Removed$. A second similar traversal fills the $1$s in $\First$ and the entries in $\FirstToRun$ and $\Samples$ for the runs whose sample was not removed.

\subsection{Solving Problem 1}

For Problem 1, we must compute $\SA[j]$, where $j$ is the end of the $p$th run, with $p=\rank_1(\Start,j)$. This position is sampled in the $\RIdx$, where the problem is thus trivial: $\SA[j]=\Samples[p]+1$. However, in the $\SRIdx$ it might be that $\Removed[p]=1$, which means that the subsampling process removed $\SA[j]$. In this case, we compute $j_k=\LF^k(j)$ for $k=1,2,\ldots$ until finding a sampled value $\SA[j_k]$ (i.e., $j_k=n$ or $\Start[j_k+1]=1$) that is not removed (i.e., $q = \rank_1(\Start,j_k)$ and $\Removed[q]=0$). We then compute $q' = q - \rank_1(\Removed,q)$, and $\SA[j] = \Samples[q']+k+1$.

The next lemma shows that we find a nonremoved sample for some $k < s$.

\begin{lemma} \label{lem:distance-s}
If there is a removed sample $\vSamp_j$ such that $\vt_i < \vSamp_j < \vt_{i+1}$, then $\vt_{i+1}-\vt_i \le s$.
\end{lemma}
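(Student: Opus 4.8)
The plan is to analyze the subsampling process around the removed sample $\vSamp_j$ and track how the surviving samples $\vt_i$ and $\vt_{i+1}$ relate to the original sorted samples $\vSamp_1 < \cdots < \vSamp_{\vr}$. Write $\vSamp_j = \vSamp_\ell$ in the original indexing, so $\vSamp_{\ell-1}$ and $\vSamp_{\ell+1}$ are its neighbours among all original samples. The key observation is that, by construction, we removed $\vSamp_\ell$ precisely because at the time the sequential test reached index $\ell$, the gap between $\vSamp_\ell$'s current left neighbour (the most recent \emph{surviving} sample to its left) and its right neighbour $\vSamp_{\ell+1}$ was at most $s$. I would first argue that the current left neighbour at that moment is exactly $\vt_i$: since $\vt_i < \vSamp_\ell < \vt_{i+1}$ and all samples strictly between $\vt_i$ and $\vt_{i+1}$ are removed, $\vt_i$ is the nearest surviving sample to the left of $\vSamp_\ell$, and the sequential removal procedure always compares against the nearest surviving left neighbour.

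Next I would handle the right side. The test that removed $\vSamp_\ell$ used $\vSamp_{\ell+1}$, its immediate original right neighbour, giving $\vSamp_{\ell+1} - \vt_i \le s$. Now either $\vSamp_{\ell+1}$ survives, in which case $\vSamp_{\ell+1} = \vt_{i+1}$ (it is the first surviving sample after $\vt_i$, hence after $\vSamp_\ell$) and we are done; or $\vSamp_{\ell+1}$ is itself removed. In the latter case I would proceed by induction / iteration: since the samples strictly between $\vSamp_\ell$ and $\vt_{i+1}$ are all removed, we can walk rightward through $\vSamp_{\ell+1}, \vSamp_{\ell+2}, \ldots$ until reaching $\vt_{i+1}$, and show each is $\le s$ past $\vt_i$. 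The cleanest way: when the sequential test removed $\vSamp_{\ell+1}$, its left surviving neighbour was still $\vt_i$ (nothing new survived in between), so the test gave $\vSamp_{\ell+2} - \vt_i \le s$; repeating, every original sample up to and including the one just before $\vt_{i+1}$ satisfies this, and finally the test at that last removed sample $\vSamp_{\ell'}$ gave $\vSamp_{\ell'+1} - \vt_i \le s$ with $\vSamp_{\ell'+1} = \vt_{i+1}$. Hence $\vt_{i+1} - \vt_i \le \vt_{i+1} - \vt_i \le s$. (Since $\vt_i \ge 1$, the inequality is even slightly loose, but $\le s$ suffices.)

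The step I expect to be the main obstacle is pinning down precisely that, throughout the rightward walk from $\vSamp_\ell$ to $\vt_{i+1}$, the "current left surviving neighbour" used by the sequential removal test never changes and stays equal to $\vt_i$. This requires being careful about the order in which the tests are applied ($i = 2, \ldots, \vr-1$) and the fact that once a sample is removed it is removed for good, so no sample between $\vt_i$ and $\vt_{i+1}$ is ever a candidate left neighbour for a later test — every such sample is removed, and the removal decisions to its left were all made earlier. A clean induction on the position index, carrying the invariant "the nearest surviving sample to the left of $\vSamp_{\ell''}$ is $\vt_i$ for every removed $\vSamp_{\ell''}$ with $\vt_i < \vSamp_{\ell''} < \vt_{i+1}$," resolves this and makes the chain of inequalities immediate. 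Everything else is bookkeeping with the definitions of $\First$, $\Samples$, and the removal rule already set up in the preceding subsection.
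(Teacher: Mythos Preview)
Your proposal is correct and follows essentially the same argument as the paper's own proof: start from the removed sample, note that by left-to-right processing its surviving left neighbour is already $\vt_i$, and then walk rightward through any further removed samples, each time using the removal condition to bound the next original sample by $\vt_i + s$ until you reach $\vt_{i+1}$. The invariant you single out as the ``main obstacle'' (that the left surviving neighbour stays $\vt_i$ throughout the walk) is exactly the point the paper leans on, so there is no meaningful difference in approach.
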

\begin{proof}
Since our subsampling process removes samples left to right, by the time we removed $\vSamp_j$, the current sample $\vt_i$ was already the nearest remaining sample to the left of $\vSamp_j$. If the sample following $\vSamp_j$ was the current $\vt_{i+1}$, then we removed $\vSamp_j$ because $\vt_{i+1}-\vt_i \le s$, and we are done. Otherwise, there were other samples to the right of $\vSamp_j$, say $\vSamp_{j+1}, \vSamp_{j+2},\ldots, \vSamp_{j+k}$, that were consecutively removed until reaching the current sample $\vt_{i+1}$. We removed $\vSamp_j$ because $\vSamp_{j+1} - \vt_i \le s$. Then, for $1 \le l < k$, we removed $\vSamp_{j+l}$  (after having removed $\vSamp_j,\vSamp_{j+1},\ldots,\vSamp_{j+l-1}$) because $\vSamp_{j+l+1}-\vt_i \le s$. Finally, we removed $\vSamp_{j+k}$ because $\vt_{i+1}-\vt_i \le s$.
\end{proof}

This implies that, from a removed sample $\Samples[p]=\vSamp$, surrounded by the remaining samples $\vt_i < \vSamp < \vt_{i+1}$, we can perform only $k = \vSamp-\vt_i < s$ LF-steps until $j_k = \LF^{(k)}(j)$ satisfies $\SA[j_k]-1=\vt_i$ and thus it is stored in $\Samples[q]$ and not removed.

If we followed verbatim the modified backward search of the $\RIdx$, finding every $\SA[\vep_i]$, we would perform $\Oh(m \cdot s)$ steps on the $\SRIdx$. We now reduce this to $\Oh(m+s)$ steps by noting
that the only value we need is $\SA[\vep]=\SA[\vep_1]$. Further, we need to know $\SA[\vep_{i+1}]$ to compute $\SA[\vep_i]$ only in the easy case where $\BWT[\vep_{i+1}]=\vP[i]$ and so $\SA[\vep_i]=\SA[\vep_{i+1}]-1$. Otherwise, the value $\SA[\vep_i]$ is computed afresh. 

We then proceed as follows. We do not compute any value $\SA[\vep_i]$ during backward search; we only remember the last (i.e., smallest) value $i'$ of $i$ where the computation was not easy, that is, where $\BWT[\vep_{i'+1}]\not=\vP[i']$. Then, $\SA[\vep_1] = \SA[\vep_{i'}]-(i'-1)$ and we need to apply the procedure described above only once: we compute $\SA[j]$, where $j$ is the largest position in $[\vsp_{i'+1}..\vep_{i'+1}]$ where $\BWT[j]=P[i']$, and then $\SA[\vep_{i'}]=\SA[j]-1$.

Algorithm~\ref{alg:count} gives the complete pseudocode that solves Problem 1.
Note that, if $\vP$ does not occur in $\vT$ (i.e., $\occ=0$) we realize this after the 
$\Oh(m)$ backward steps because some $\vsp_i > \vep_i$, and thus we do not spend the $\Oh(s)$ extra steps. 

\begin{algorithm}[t]

\IO{Search pattern $\vP[1..\vm]$.}
    {Returns suffix array range $[\vsp,\vep]$ for $\vP$ and $\SA[\vep]$.}

$\vsp \leftarrow 1$; $\vep \leftarrow \vn+1$ \\
$i \leftarrow \vm$; $i' \leftarrow \vm+1$ \\
\While{$i \ge 1$ \KwAnd $\vsp \le \vep$}
   { $p \leftarrow \rank_1(\Start,\vep)$ \\     
     \If{{\em \Letter}$[p] \not= \vP[i]$}
        { $i' \leftarrow i$;     $p' \leftarrow p$ }
     $c \leftarrow \vP[i]$ \\
     $\vsp \leftarrow C[c]+\rank_c(\BWT,\vsp-1)+1$\\
     $\vep \leftarrow C[c]+\rank_c(\BWT,\vep)$
    }
\lIf{$\vsp > \vep$} {{\bf return} ``$\vP$ does not occur in $\vT$''}
\lIf{$i'=\vm+1$} {{\bf return} $[\vsp,\vep]$ and $\SA[\vep]=\SA[n]-\vm$ ($\SA[n]$ is stored)}
$c \leftarrow \vP[i']$ \\
$q \leftarrow \select_c(\Letter,\rank_c(\Letter,p'))$ (supported by the $\RLFMIdx$/$\RIdx$) \\
$j \leftarrow \select_1(\Start,q+1)-1$ \\
$k \leftarrow 0$ \\
\While{$(j < n$ \KwAnd {\em \Start}$[j+1]=0)$ \KwOr {\em \Removed}$[q] = 1$}
    { $j \leftarrow \LF(j)$ \\
      $q \leftarrow \rank_1(\Start,j)$ \\
      $k \leftarrow k+1$
    }
{\bf return} $[\vsp,\vep]$ and $\SA[\vep] = \Samples[q-\rank_1(\Removed,q)]+k+1-(i'-1)$

\caption{Counting pattern occurrences on the $\SRIdx$.}
\label{alg:count}
\end{algorithm}


\subsection{Solving Problem 2}

For Problem 2, finding $\SA[j-1]$ from $i=\SA[j]-1$, we first proceed as in Problem 1, from $j-1$.
We compute $j_k' = \LF^k(j-1)$ for
$k=0,\ldots,\vs-1$. If any of those $j_k'$ is the last symbol of its run (i.e., $j_k'=\vn$ or $\Start[j_k'+1]=1$), and the sample corresponding to
this run was not removed (i.e., $\Removed[q]=0$, with $q=\rank_1(\Start,j_k')$),
then we can obtain immediately $\SA[j_k'] = \Samples[q']+1$, where
$q'=q-\rank_1(\Removed,q)$, and thus $\SA[j-1]=\SA[j_k']+k$. 

Unlike in Problem 1, $\SA[j-1]$ is not necessarily an end of run, and therefore we are not guaranteed to find a solution for $0 \le k < s$. However, the following property shows that, if there were some end of runs $j_k'$, it is not possible that all were removed from $\Samples$.

\begin{lemma} \label{lem:noremoved}
If there are no remaining samples in $\SA[j-1]-s,\ldots,\SA[j-1]-1$, then no sample was removed between $\SA[j-1]-1$ and its preceding remaining sample.
\end{lemma}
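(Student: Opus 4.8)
The plan is to reduce the statement to Lemma~\ref{lem:distance-s} by pinning down the remaining sample that sits immediately to the left of $\SA[j-1]-1$. First I would let $\vt_i$ be the largest remaining sample strictly below $\SA[j-1]$; this is the ``preceding remaining sample'' of $\SA[j-1]-1$ mentioned in the statement. Two boundary situations are trivial: if no remaining sample lies below $\SA[j-1]$ at all then there is no removed sample there either, since the smallest original sample $\vSamp_1$ is never removed; and if $\vt_i=\vSamp_{\vr}$, the largest original sample, then it is likewise never removed and nothing lies above it. Otherwise, the hypothesis that the text window $\{\SA[j-1]-\vs,\ldots,\SA[j-1]-1\}$ contains no remaining sample, together with $\vt_i<\SA[j-1]$, forces $\vt_i\le \SA[j-1]-\vs-1$; and by the defining maximality of $\vt_i$ the next remaining sample $\vt_{i+1}$ satisfies $\vt_{i+1}\ge \SA[j-1]$.

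Next I would argue by contradiction. Suppose an original sample that was removed sits at a text position $x$ with $\vt_i<x\le \SA[j-1]-1$. Then $x<\SA[j-1]\le \vt_{i+1}$, so $\vt_i<x<\vt_{i+1}$ with $\vt_i,\vt_{i+1}$ consecutive remaining samples, and Lemma~\ref{lem:distance-s} applied to the removed sample at $x$ gives $\vt_{i+1}-\vt_i\le \vs$. But the two bounds above give $\vt_{i+1}-\vt_i\ge \SA[j-1]-(\SA[j-1]-\vs-1)=\vs+1$, a contradiction. Hence no removed original sample lies in $(\vt_i,\SA[j-1]-1]$, and in particular none in the window $\{\SA[j-1]-\vs,\ldots,\SA[j-1]-1\}$, which is the claim. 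As a byproduct, combined with the hypothesis this means no original sample at all lies in that window, so none of the positions $\LF^k(j-1)$ for $0\le k<\vs$ ends a $\BWT$ run --- which is how the lemma will be used when setting up the fallback for Problem~2.

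The proof is short, so the points that need care are the bookkeeping ones. I would double-check that the window relevant to Problem~2 is exactly $\{\SA[j-1]-\vs,\ldots,\SA[j-1]-1\}$: a position $\LF^k(j-1)$ that ends a run carries sample value $\SA[\LF^k(j-1)]-1=\SA[j-1]-k-1$, which sweeps precisely that window as $k$ runs over $0,\ldots,\vs-1$. I would also verify that the hypotheses of Lemma~\ref{lem:distance-s} truly hold --- that $\vt_i$, chosen as the remaining predecessor of $\SA[j-1]-1$, is in fact the remaining sample immediately preceding the hypothetical removed one at $x$, which holds because any remaining sample in $(\vt_i,x]\subseteq(\vt_i,\SA[j-1]-1]$ would contradict the maximality of $\vt_i$ --- and note that if $\SA[j-1]$ is small enough that the window runs past the start of the text, the cyclic/sentinel convention already in force for the $\RIdx$ makes the statement hold vacuously there. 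I do not expect any genuine obstacle beyond this case analysis.
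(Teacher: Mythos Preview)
Your proposal is correct and follows essentially the same route as the paper: pin down the remaining samples $\vt_i<\SA[j-1]-1<\vt_{i+1}$, use the hypothesis to force $\vt_{i+1}-\vt_i>s$, and then invoke (the contrapositive of) Lemma~\ref{lem:distance-s}. The paper's proof is a two-line version of yours without the boundary-case discussion or the explicit contradiction framing.
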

\begin{proof}
Let $\vt_i < \SA[j-1]-1 < \vt_{i+1}$ be the samples surrounding $\SA[j-1]-1$, so the remaining sample preceding $\SA[j-1]-1$ is $\vt_i$. Since $\vt_i < \SA[j-1]-s$, it follows that $\vt_{i+1}-\vt_i > s$ and thus, by Lemma~\ref{lem:distance-s}, no samples were removed between $\vt_i$ and $\vt_{i+1}$.
\end{proof}

This means that, if the process above fails to find an answer, then 
we can directly use Eq.~(\ref{eq:rindex}), as we prove next.

\begin{lemma}
If there are no remaining samples in $\SA[j-1]-s,\ldots,\SA[j-1]-1$, then subsampling removed no $1$s in {\em $\First$} between positions $i=\SA[j]-1$ and {\em $\pred(\First,i)$}.
\end{lemma}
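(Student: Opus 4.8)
The plan is to reduce the statement, through the $\LF$-geometry already spelled out for Eq.~(\ref{eq:rindex}) and Figure~\ref{fig:figura}, to something Lemma~\ref{lem:noremoved} almost hands us for free. Write $j_0=j$ and $j_\ell=\LF^\ell(j)$, and let $K\ge 0$ be the least index for which $j_K$ starts a $\BWT$ run (so $K=0$ exactly when $j-1$ itself ends a run, and $K\ge 1$ when $j-1$ and $j$ lie in the same run). Exactly as in the argument behind Eq.~(\ref{eq:rindex}), for $0\le\ell\le K$ the $\LF$-orbit of $j-1$ runs one cell below that of $j$, so $\LF^\ell(j-1)=j_\ell-1$, the cell $j_K-1$ ends the run just before $j_K$, and in the text $\SA[j_\ell]=\SA[j]-\ell$ and $\SA[j_\ell-1]=\SA[j-1]-\ell$. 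Since the $1$s of the original $\First$ sit precisely at the positions $\SA[j']-1$ with $j'$ starting a run, and $\First[\SA[j_\ell]-1]=1$ iff $j_\ell$ starts a run, the rightmost $1$ of the original $\First$ at or before $i=\SA[j]-1$ is the one at $\pred(\First,i)=\SA[j_K]-1=i-K$, and it is the \emph{only} $1$ of the original $\First$ in $[\pred(\First,i),i]$. By the subsampling rule, that $1$ --- the first-letter mark of the run starting at $j_K$ --- is discarded exactly when the sample ending the previous run is discarded, and that sample sits at text position $z:=\SA[j_K-1]-1=\SA[j-1]-K-1$. So it suffices to prove that $z$ was not removed; then $\pred$ is unchanged on the subsampled $\First$ and Eq.~(\ref{eq:rindex}) still yields $\SA[j-1]$.

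The second step is to pin $z$ down among the original samples. A text position $y$ carries an original sample iff $\SA^{-1}[y+1]$ ends a $\BWT$ run; for $y=\SA[j-1]-1-\ell$ with $0\le\ell<K$ we have $\SA^{-1}[y+1]=\LF^{\ell}(j-1)=j_\ell-1$, which does not end a run since $j_\ell$ does not start one, whereas for $y=z$ we get $\SA^{-1}[z+1]=j_K-1$, which does. Hence $z$ is an original sample, and it is the \emph{largest} original sample that is $\le\SA[j-1]-1$.

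To close I would apply Lemma~\ref{lem:noremoved}, using the slightly stronger fact its proof actually establishes: if $\vt_i<\SA[j-1]-1<\vt_{i+1}$ are the remaining samples surrounding $\SA[j-1]-1$ (which exist because, under the hypothesis, $\SA[j-1]-1$ itself carries no remaining sample), then $\vt_{i+1}-\vt_i>s$ and hence, by Lemma~\ref{lem:distance-s}, no sample was removed strictly between $\vt_i$ and $\vt_{i+1}$. Every remaining sample is an original sample, so $\vt_i$ is an original sample $<\SA[j-1]-1$, whence $\vt_i\le z\le\SA[j-1]-1<\vt_{i+1}$. If $\vt_i<z$, then $z$ lies strictly between the consecutive remaining samples $\vt_i$ and $\vt_{i+1}$ and, not having been removed, would itself be a remaining sample there --- impossible; and if $z=\SA[j-1]-1$ this would make $z$ a non-removed, hence remaining, sample in $\{\SA[j-1]-s,\dots,\SA[j-1]-1\}$, contradicting the hypothesis. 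So $\vt_i=z$, the sample at $z$ is a remaining sample, the $1$ of $\First$ at $\pred(\First,i)$ survives the subsampling, and --- being the only original $1$ in $[\pred(\First,i),i]$ --- no $1$ of $\First$ between $i$ and $\pred(\First,i)$ was removed.

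The step I expect to fight hardest is the first one: keeping the $\LF/\SA^{-1}$ bookkeeping consistent and correctly disposing of the boundary cases --- $K=0$ (when $j-1$ already ends a run), an orbit that passes through $\BWT$-cell $1$, and the degenerate possibility that no remaining sample precedes $\SA[j-1]-1$ (handled by using the first remaining sample, or an empty prefix, in place of $\vt_i$). Once ``$z$ is the largest original sample $\le\SA[j-1]-1$'' is established, the remainder is the short comparison against $\vt_i$, and that is precisely where the hypothesis is consumed, through Lemma~\ref{lem:noremoved} and, underneath it, Lemma~\ref{lem:distance-s}.
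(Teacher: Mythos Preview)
Your argument is correct and follows essentially the same route as the paper's proof: both identify the closest original sample below $\SA[j-1]-1$, invoke Lemma~\ref{lem:noremoved} (and through it Lemma~\ref{lem:distance-s}) to conclude it coincides with the preceding remaining sample $\vt_i$, and then transport this statement across the sample/$\First$ correspondence to conclude that no $1$s were removed down to $\pred(\First,i)$. The only real difference is packaging: the paper starts from $\vt_i$, sets $k=\SA[j-1]-1-\vt_i$, and compresses the whole $\LF$-geometry into the word ``consequently'' when passing from samples to $\First$; you instead start from the $\LF$-orbit, define $K$ and $z$ explicitly, and then argue $z=\vt_i$ (so your $K$ equals the paper's $k$). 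Your version is longer but more self-contained, and your explicit treatment of the boundary cases ($K=0$, no preceding remaining sample) is welcome since the paper's proof silently assumes $\vt_i$ exists and strictly precedes $\SA[j-1]-1$. The separate case ``$z=\SA[j-1]-1$'' in your third paragraph is redundant, since it is already covered by the dichotomy $\vt_i=z$ versus $\vt_i<z<\vt_{i+1}$, but it does no harm.
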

\begin{proof}
Let $\vt_i < \SA[j-1]-1 < \vt_{i+1}$ be the samples surrounding $\SA[j-1]-1$, and
$k=\SA[j-1]-1-\vt_i$. Lemma~\ref{lem:noremoved} implies that no sample existed between $\SA[j-1]-1$ and $\SA[j-1]-k=t_i+1$, and there exists one at $t_i$. Consequently, no $1$ existed in $\First$ between positions $\SA[j]-1$ and $\SA[j]-k$, and there exists one in $\SA[j]-1-k$. Indeed, $\pred(\First,i)=\SA[j]-1-k$.
\end{proof}

A final twist, which does not change the worst-case complexity but improves performance in practice, is to reuse work among successive occurrences. Let $\BWT[sm..em]$ be a maximal run inside $\BWT[\vsp..\vep]$. For every $sm \le j \le em$, the first LF-step will lead us to $\LF(j) = \LF(sm)+(j-sm)$; therefore we can obtain them all with only one computation of $\LF$.
Therefore, instead of finding $\SA[\vsp],\ldots,\SA[\vep]$ one by one, we report $\SA[\vep]$ (which we know) and cut $\BWT[\vsp..\vep-1]$ into maximal runs using bitvector $\Start$. Then, for each maximal run $\BWT[sm..em]$, if the end of run $\BWT[em]$ is sampled, we report its position and continue recursively reporting $\SA[\LF(sm)..\LF(sm)+(em-sm)-1]$; otherwise we continue recursively reporting $\SA[\LF(sm)..\LF(sm)+(em-sm)]$. Note that we must add $k$ to the results reported at level $k$ of the recursion. By Lemma~\ref{lem:distance-s}, every end of run found in the way has been reported before level $k=s$. When $k=s$, then, we use Eq.~(\ref{eq:rindex}) to obtain $\SA[em],\ldots,\SA[sm]$ consecutively from $\SA[em+1]$, which must have been reported because it is $\vep$ or was an end of run at some level of the recursion. 

Algorithm~\ref{alg:locate} gives the complete procedure to solve Problem 2.

\begin{algorithm}[t]

\IO{Global array $Res[1..\occ]$ of results, range $[\vsp,\vep]$ to report, $\SA[\vep]$.}
    {Fills $Res[i] = \SA[sp-1+i]$ for all $1 \le i \le \occ$.}

$Res[\vep-\vsp+1] \leftarrow \SA[\vep]$ (known from backward search) \\
\lIf{$\vsp<\vep$} {$locate(1,\vep-\vsp,0)$}

\medskip

\Fn{$locate(sm,em,k)$}{
 \If{$k = \vs$}
    { \For{$im = em,\ldots,sm$}
         { $i \leftarrow Res[im+1]-1$ \\
         $Res[im] \leftarrow \phi(i)$ (Eq.~(\ref{eq:rindex}))
         }
     }
  \Else
     { \If{{\em \Start}$[\vsp+em]=1$}
          { $q \leftarrow \rank_1(\Start,\vsp-1+em)$ \\
          \If{{\em \Removed}$[q]=0$}
             { $Res[em] \leftarrow \Samples[q-\rank_1(\Removed,q)]+1+k$ \\
               $em \leftarrow em-1$
             }
          }
       \While{$sm \le em$}
          { $q \leftarrow \rank_1(\Start,\vsp-1+sm)$ \\
          $im \leftarrow \select_1(\Start,q+1)$ \\
          \lIf{$im-1>em$} {$im \leftarrow em+1$}
          $locate(sm,im-1,k+1)$ \\
          $sm \leftarrow im$
          }
     }
}
\caption{Locating pattern occurrences on the $\SRIdx$.}
\label{alg:locate}
\end{algorithm}

\subsection{The basic index, $\SRIdx_0$}

We have just described our most space-efficient index, which we call $\SRIdx_0$. Its space and time complexity is established in the next theorem.

\begin{theorem} \label{thm:srindex}
The $\SRIdx_0$ uses $r \cdot ((1+\epsilon)\lg(n/r)+\lg\sigma+\Oh(1)) + \min(r,2\lceil n/(s+1)\rceil)\cdot 2\lg n$ bits of space, for any constant $\epsilon>0$, and finds all the $\occ$ occurrences of $\vP[1..\vm]$ in $\vT$ in time $\Oh((m+s\cdot occ)\log(\sigma+n/r))$.
\end{theorem}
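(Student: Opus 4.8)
The plan is to combine the space accounting of the counting and (subsampled) locating structures with the time analysis of Algorithms~\ref{alg:count} and~\ref{alg:locate}. For the space bound, the counting structures are unchanged from the implemented $\RIdx$ and, as recalled in Section~\ref{sec:rindex}, occupy $r\cdot((1+\epsilon)\lg(n/r)+\lg\sigma+\Oh(1))$ bits. The subsampled locating structures $\First$, $\FirstToRun$, and $\Samples$ are covered by Lemma~\ref{lem:space}, which gives $\min(r,2\lceil n/(s+1)\rceil)\cdot(2\lg n+\Oh(1))$ bits; the extra bitvector $\Removed$ adds only $r+o(r)=\Oh(r)$ bits, which is absorbed into the $\Oh(1)$ term multiplying $r$ in the first summand. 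Summing yields exactly the claimed bound $r\cdot((1+\epsilon)\lg(n/r)+\lg\sigma+\Oh(1))+\min(r,2\lceil n/(s+1)\rceil)\cdot 2\lg n$.

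For the time bound I would split into counting (Algorithm~\ref{alg:count}) and locating (Algorithm~\ref{alg:locate}). Counting performs $m$ backward-search iterations, each doing $\Oh(1)$ $\rank$/$\select$ operations on $\Start$, $\Letter$, and $\BWT$, costing $\Oh(\log(\sigma+n/r))$ each (these operations on the $\RLFMIdx$ structures have this cost, as recalled in the background). Afterwards, the single ``afresh'' computation of $\SA[\vep_{i'}]$ runs the $\LF$-loop; by Lemma~\ref{lem:distance-s} (applied as in the paragraph following it) the loop halts after $k<s$ iterations, each an $\LF$-step and a constant number of $\rank$ operations, so this phase costs $\Oh(s\log(\sigma+n/r))$. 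If $\vP$ does not occur, the algorithm returns after the $\Oh(m)$ backward steps without entering this loop. Hence counting costs $\Oh((m+s)\log(\sigma+n/r))$.

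For locating I would argue that the recursion tree of $locate$ has depth at most $s$: a branch stops either when it reaches level $k=s$ or when the corresponding sub-range of $\BWT$ has been entirely emptied by reporting end-of-run positions. Every call at a level $<s$ either reports one end-of-run value (an $\Oh(1)$ charge against one of the $\occ$ output positions) via the non-removed sample—valid because Lemma~\ref{lem:distance-s} guarantees that the sample one sees after at most $s$ LF-steps has not been removed—or splits the current maximal-run partition, performing $\Oh(1)$ $\rank$/$\select$ operations on $\Start$ and one $\LF$-step per maximal run it spawns; the number of maximal runs ever created is $\Oh(\occ)$ because the initial range has $\Oh(\occ)$ maximal $\BWT$-runs and each $\LF$-step maps a maximal run to (a prefix of) a maximal run. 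When a branch does reach level $s$, the correctness of falling back to Eq.~(\ref{eq:rindex}) is exactly Lemmas~\ref{lem:noremoved} and the last unnamed lemma, which ensure $\pred(\First,\cdot)$ still points to the right place; that fallback does $\Oh(1)$ work per reported position. Since each of the $\occ$ outputs is touched at most $s$ times across the $\le s$ recursion levels, and each touch costs $\Oh(\log(\sigma+n/r))$ for the $\rank$/$\select$/$\LF$ operations, locating costs $\Oh(s\cdot\occ\cdot\log(\sigma+n/r))$. Adding the counting cost gives the total $\Oh((m+s\cdot\occ)\log(\sigma+n/r))$.

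The main obstacle I expect is the depth-and-work bookkeeping of the recursion in Algorithm~\ref{alg:locate}: one must argue carefully that (i) every end-of-run encountered along the LF-traversals is reported \emph{before} level $s$ (so that no output is missed), which is where Lemma~\ref{lem:distance-s} is used in its strongest form, and (ii) the total number of $locate$ calls, hence of $\rank$/$\select$/$\LF$ operations, is $\Oh(s\cdot\occ)$ rather than something larger—this needs the observation that $\LF$ does not split a maximal $\BWT$-run, so the maximal-run partition of each level's ranges has total size $\Oh(\occ)$, and that a call that neither reports nor splits cannot occur. Once these two facts are pinned down, the rest is routine summation.
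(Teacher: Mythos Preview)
Your plan follows essentially the same route as the paper's proof: space is the sum of the unchanged counting structures, Lemma~\ref{lem:space} for the subsampled locating structures, and $\Removed$ absorbed into the $\Oh(r)$ slack; time is $\Oh(m+s)$ steps for counting (via Lemma~\ref{lem:distance-s}) plus $\Oh(s)$ steps per occurrence for locating, each step costing $\Oh(\log(\sigma+n/r))$.

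There is, however, one technical point you have not accounted for and that the paper handles explicitly. You write that the fallback to Eq.~(\ref{eq:rindex}) ``does $\Oh(1)$ work per reported position'' and that every touch costs $\Oh(\log(\sigma+n/r))$. But Eq.~(\ref{eq:rindex}) performs a $\rank_1$ on $\First$, and after subsampling $\First$ has only $\Oh(\min(r,n/s))$ ones, so that $\rank$ costs $\Oh(\log(n/\min(r,n/s)))=\Oh(\log(n/r+s))$, which is \emph{not} bounded by $\Oh(\log(\sigma+n/r))$ when $s$ is large. The paper closes this by observing that this particular $\rank$ is executed at most once per reported occurrence, so its contribution is $\Oh(\log(n/r+s))$ per occurrence, which is dominated by the $\Oh(s\log(\sigma+n/r))$ already charged to the $s$ LF-steps. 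You should add this argument; without it the per-step cost claim is not justified.

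A smaller correction: your claim that ``each $\LF$-step maps a maximal run to (a prefix of) a maximal run'' is false. $\LF$ maps a maximal run $\BWT[sm..em]$ to a contiguous range, but that range may straddle several $\BWT$ runs, so the maximal-run partition can refine under $\LF$. Fortunately this claim is not load-bearing: your other argument, that each of the $\occ$ output positions is touched at most once per recursion level and there are at most $s$ levels, already yields the $\Oh(s\cdot\occ)$ bound on the number of calls (and hence of $\rank$/$\select$/$\LF$ operations) without any control on how runs split.
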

\begin{proof}
The space is the sum of the counting structures of the $\RIdx$ and our modified locating structures, according to Lemma~\ref{lem:space}. The space of bitvector $\Removed$ is $\Oh(r)$ bits, which is accounted for in the formula.

As for the time, we have seen that the modified backward search requires $\Oh(\vm)$ steps if $\occ=0$ and $\Oh(\vm+\vs)$ otherwise (Problem 1). Each occurrence is then located in $\Oh(\vs)$ steps (Problem 2). In total, we complete the search with $\Oh(\vm+\vs\cdot \occ)$ steps.

Each step involves $\Oh(\log(\sigma+n/r))$ time in the basic $\RIdx$ implementation, including Eq.~(\ref{eq:rindex}). Our index includes additional $\rank$s on $\Start$ and other constant-time operations, which are all in $\Oh(\log(n/r))$. Since the $\First$ now has $\Oh(\min(r,n/s))$ $1$s, however, operation $\rank_1$ on it takes time $\Oh(\log (n/\min(r,n/s))) = 
\Oh(\log\max(n/r,s)) = \Oh(\log(n/r+s))$. Yet, this $\rank$ is computed only once per occurrence reported, when using Eq.~(\ref{eq:rindex}), so the total time per occurrence is still $\Oh(\log(n/r+s)+s \cdot \log(\sigma+n/r)) = \Oh(s \cdot \log(\sigma+n/r))$.
\end{proof}

Note that, in asymptotic terms, the $\SRIdx$ is never worse than the $\RLFMIdx$ with the same value of $s$ and, with $s=1$, it boils down to the $\RIdx$.
Using predecessor data structures of the same asymptotic space of our lighter sparse bitvectors, the logarithmic times can be reduced to loglogarithmic \cite{GagieNP20:FFS}, but our focus is on low practical space usage.

Note also that this theorem can be obtained by simply choosing the smallest between the $\RIdx$ and the $\RLFMIdx$. In practice, however, the $\SRIdx$ performs much better than both extremes, providing a smooth transition that retains sparsely indexed areas of $\vT$ while removing redundancy in oversampled areas. This will be demonstrated in Section~\ref{sec:result}.

\subsection{A faster and larger index, $\SRIdx_1$}

The $\SRIdx_0$ guarantees locating time proportional to $\vs$ and uses almost no extra
space. On the other hand, on Problem 2 it performs up to $s$ LF-steps for {\em every} 
occurrence, even when this turns out to be useless. 
The variant $\SRIdx_1$ adds a new component, also small, to speed up some cases:

\begin{description}
\item[{$\Valid$}:] a bitvector storing one bit per (remaining) sample in text
order, so that $\Valid[q]=0$ iff there were removed samples between the $q$th 
and the $(q+1)$th $1$s of $\First$. 
\end{description}

With this bitvector, if we have $i=\SA[j]-1$ and $\Valid[\rank_1(\First,i)]=1$, we 
know that there were no removed samples between $i$ and $\pred(\First,i)$ (even if they are  less than $s$ positions apart). In this case we can skip the computation of $\LF^k(j-1)$ of $\SRIdx_0$, and 
directly use Eq.~(\ref{eq:rindex}). Otherwise, we must proceed exactly as
in $\SRIdx_0$ (where it is still possible that we compute all the LF-steps
unnecessarily).
More precisely, this can be tested for every value between $sm$ and $em$ so as to report some further cells before recursing on the remaining ones, in lines 14--19 of Algorithm~\ref{alg:locate}. 

The space and worst-case complexities of Theorem~\ref{thm:srindex} are preserved in $\SRIdx_1$.

\subsection{Even faster and larger, $\SRIdx_2$}

Our final variant, $\SRIdx_2$, adds a second and significantly larger structure:

\begin{description}
\item[{$\ValidArea$}:] an array whose cells are associated with the $0$s in 
$\Valid$. If $\Valid[q]=0$, then $d = \ValidArea[q-\rank_1(\Valid,q)]$ is 
the distance from the $q$th $1$ in $\First$ to the next removed sample. Each
entry in $\ValidArea$ requires $\lceil \lg \vs \rceil$ bits, because
removed samples must be at distance less than $\vs$ from their preceding sample,
by Lemma~\ref{lem:distance-s}.
\end{description}

If $\Valid[\rank_1(\First,i)]=0$, then there was a removed sample at $\pred(\First,i)+d$, but not before. So, if $i < \pred(\First,i)+d$, we can still use Eq.~(\ref{eq:rindex}); otherwise we must compute the LF-steps $\LF^k(j-1)$ and we are guaranteed to succeed in less than $s$ steps. This improves performance considerably in practice, though the worst-case time complexity stays as in Theorem~\ref{thm:srindex} and the space increases by at most $\vr\lg \vs$ bits.

\section{Experimental Results} \label{sec:result}

We implemented the $\SRIdx$ in C++14, on top of the SDSL library\footnote{From {\tt https://github.com/simongog/sdsl-lite}.}, and made it available at {\tt https://github.com/duscob/sr-index}.

We benchmarked the $\SRIdx$ against available implementations for the $\RIdx$, the $\RLFMIdx$, and several other indexes for repetitive text collections.

Our experiments ran on a hardware with two Intel(R) Xeon(R) CPU E5-2407 processors at $2.40$ GHz and $250$ GB RAM.
The operating system was Debian Linux kernel \texttt{4.9.0-14-amd64}.
We compiled with full optimization and no multithreading.

Our reported times are the average user time over 1000 searches for patterns of length $m=10$ obtained at random from the texts. We give space in bits per symbol (bps) and times in microseconds per occurrence ($\mu$s/occ). Indexes that could not be built on some collection, or that are out of scale in space or time, are omitted in the corresponding plots.

\subsection{Tested indexes}

We included the following indexes in our benchmark; their space decrease as $s$ grows:

\begin{description}
\item[$\SRIdx$:] Our index, including the three variants, with sampling values $s=4, 8, 16, 32, 64$.
\item[$\RIdx$:] The $\RIdx$ implementation we build on.\footnote{From {\tt https://github.com/nicolaprezza/r-index}.}
\item[\textsf{rlcsa}:] An implementation of the run-length CSA \cite{MakinenNSV10:SRH}, which outperforms the actual $\RLFMIdx$ implementation.\footnote{From {\tt https://github.com/adamnovak/rlcsa}.} We use text sampling values $s=\vsTColl / \vnBWTRuns \times f / 8$, with $f=8, 10, 12, 14, 16$.
\item[\textsf{csa}:] An implementation of the CSA \cite{Sad03}, which outperforms in practice the $\FMIdx$ \cite{FerraginaM05:ICT,FerraginaMMN07:CRS}. This index, obtained from SDSL, acts as a control baseline that is not designed for repetitive collections.
 We use text sampling parameter $s=16, 32, 64, 128$.
\item[$\GIdx$:] The best grammar-based index implementation we are aware of \cite{CNP21}.\footnote{From {\tt https://github.com/apachecom/grammar\_improved\_index}.} We use Patricia trees sampling values $s=4, 16, 64$.
\item[$\LZIdx$ and $\LZEndIdx$:] Two variants of the Lempel-Ziv based index \cite{KreftN13:CIR}.\footnote{From {\tt https://github.com/migumar2/uiHRDC}.}
\item[$\HybridIdx$:] A hybrid between a Lempel-Ziv and a $\BWT$-based index \cite{FKP18}.\footnote{From {\tt https://github.com/hferrada/HydridSelfIndex}.} We build it with parameters $M = 8,16$, the best for this case. 
\end{description}

\subsection{Collections}

We benchmark various repetitive text collections; Table~\ref{tab:texts} gives some basic measures on them.

\begin{description}
\item[PizzaChili:] A generic collection of real-life texts of various sorts and repetitiveness levels, which we use to obtain a general idea of how the indexes compare. We use 4 collections of microorganism genomes (\textsf{influenza}, \textsf{cere}, \textsf{para}, and \textsf{escherichia}) and 4 versioned document collections (the English version of \textsf{einstein}, \textsf{kernel}, \textsf{worldleaders}, \textsf{coreutils}).\footnote{From {\tt http://pizzachili.dcc.uchile.cl/repcorpus/real}.}
\item[Synthetic DNA:] A 100KB DNA text from PizzaChili, replicated $1{,}000$ times and each copied symbol mutated with a probability from $0.001$ (\textsf{DNA-001}, analogous to human assembled genomes) to $0.03$ (\textsf{DNA-030}, analogous to sequence reads). We use this collection to study how the indexes evolve as repetitiveness decreases.
\item[Real DNA:] Some real DNA collections to study other aspects:
\begin{description}
\item[\textsf{HLA}:] A dataset with copies of the short arm (p arm) of human chromosome 6 \cite{RBGCFM19}.\footnote{From {\tt ftp://ftp.ebi.ac.uk/pub/databases/ipd/imgt/hla/fasta/hla\_gen.fasta}.}  This arm contains about 60 million base pairs (Mbp) and it  includes the 3 Mbp HLA region.  That region is known to be highly variable, so the $\RIdx$ sampling should be sparse for most of the arm and oversample the HLA region.
\item[\textsf{Chr19} and \textsf{Salmonella}:] Human and bacterial assembled genome collections, respectively, of a few billion base pairs. We include them to study how the indexes behave on more massive data. \textsf{Chr19} is the set of 50 human chromosome 19 genomes taken from the 1000 Genomes Project~\cite{1000genomes}, whereas \textsf{Salmonella} is the set of 815 Salmonella genomes from the GenomeTrakr project~\cite{stevens2017public}.
\item[\textsf{Reads}:] A large collection of sequence reads, which tend to be considerably less repetitive than assembled genomes.\footnote{From {\tt https://trace.ncbi.nlm.nih.gov/Traces/sra/?run=ERR008613}.} We include this collection to study the behavior of the indexes on a popular kind of bioinformatic collection with mild repetitiveness. In \textsf{Reads} the sequencing errors have been corrected, and thus its $n/r \approx 9$ is higher than the $n/r \approx 4$ reported on crude reads \cite{DN21}.
\end{description}
\end{description}

\begin{table}[t]
\begin{center}
\begin{tabular}{l|r|r || l|r|r}
Collection & Size & $n/r$ & Collection & Size & $n/r$ \\
\hline
\textsf{influenza} & 147.6 & 51.2 &
\textsf{DNA-001} & 100.0 & 142.4 \\ 
\textsf{cere} & 439.9 & 39.9 & 
\textsf{DNA-003} & 100.0 & 58.3 \\ 
\textsf{para} & 409.4 & 27.4 & 
\textsf{DNA-010} & 100.0 & 26.0 \\ 
\textsf{escherichia} & 107.5 & 7.5 &
\textsf{DNA-030} & 100.0 & 11.6 \\
\hline
\textsf{einstein} & 447.7 & 1611.2 &
\textsf{HLA}     & 53.7 & 161.4 \\ 
\textsf{kernel} & 238.0 & 92.4 &
\textsf{Chr19} & 2{,}819.3 & 89.2 \\ 
\textsf{worldleaders} & 44.7 & 81.9 &
\textsf{Salmonella}     & 3{,}840.5 & 43.9 \\ 
\textsf{coreutils} & 195.8 & 43.8 &
\textsf{Reads}     & 2{,}565.5 & 8.9 \\
\end{tabular}
\end{center}
\caption{Basic characteristics of the repetitive texts used in our benchmark. Size is given in MB.}
\label{tab:texts}
\end{table}

\subsection{Results}

Figures~\ref{fig:exp1} and \ref{fig:exp2} show the space taken by all the indexes and their search time.

\begin{figure}[p]
\includegraphics[width=0.49\textwidth]{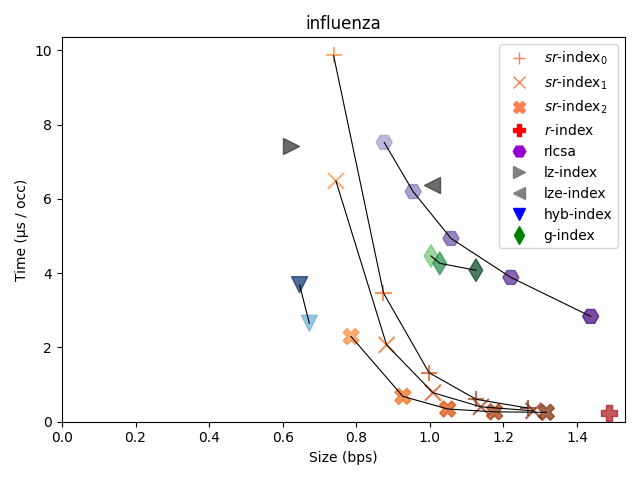}
\includegraphics[width=0.49\textwidth]{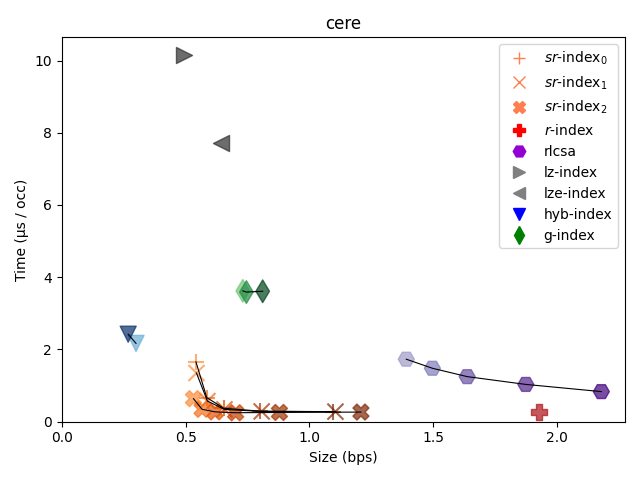}
\includegraphics[width=0.49\textwidth]{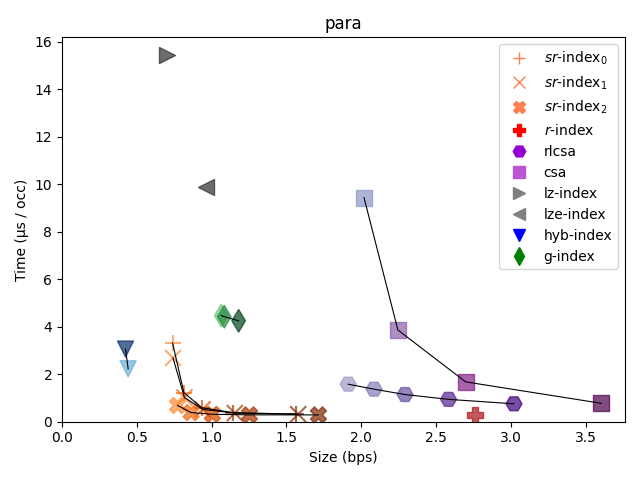}
\includegraphics[width=0.49\textwidth]{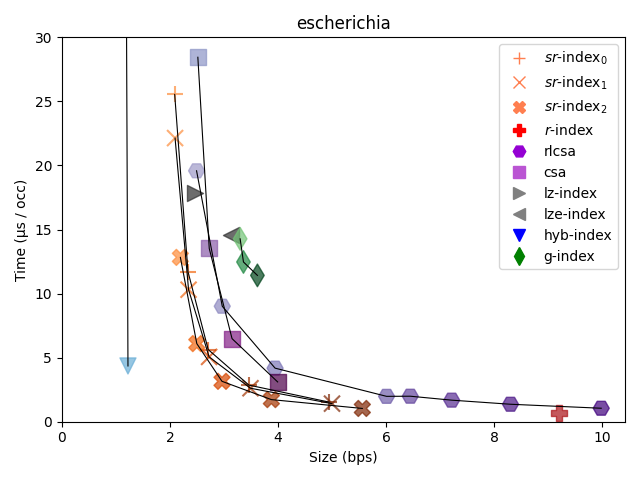}
\includegraphics[width=0.49\textwidth]{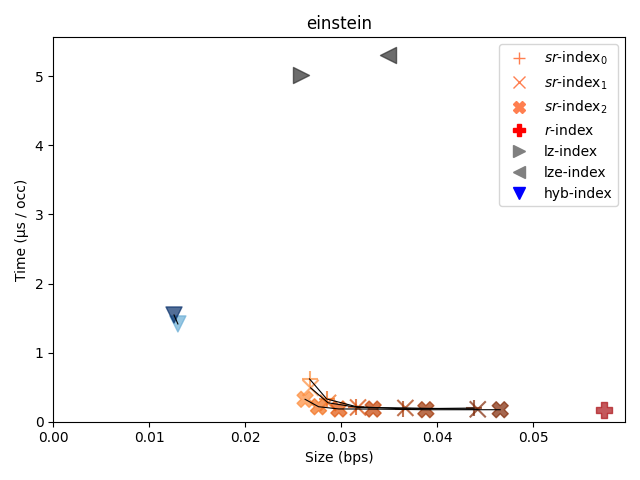}
\includegraphics[width=0.49\textwidth]{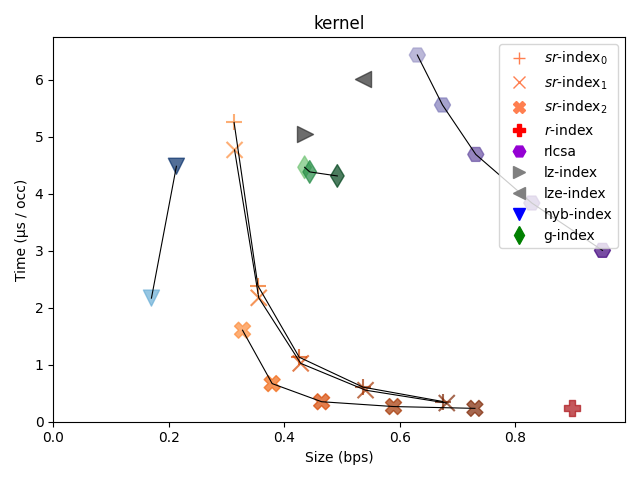}
\includegraphics[width=0.49\textwidth]{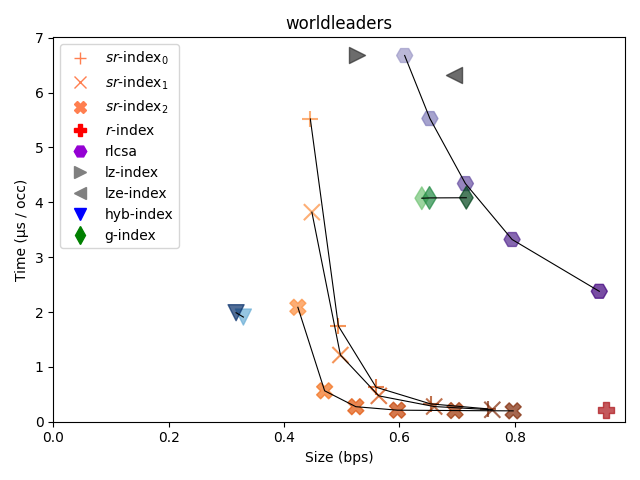}
\includegraphics[width=0.49\textwidth]{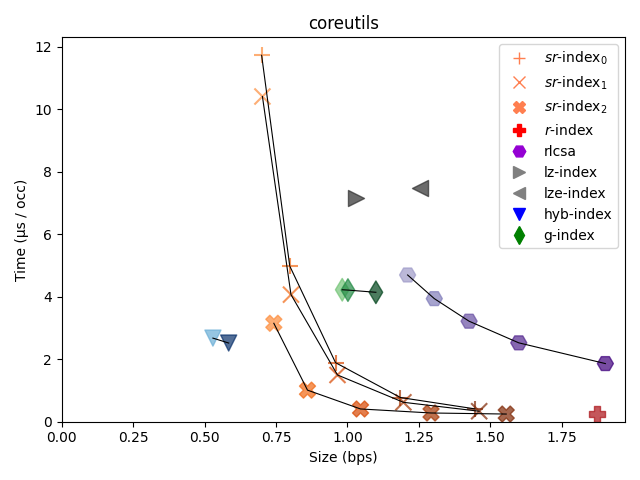}
\caption{Space-time tradeoffs for the PizzaChili collections.}
\label{fig:exp1}
\end{figure}

\begin{figure}[p]
\includegraphics[width=0.49\textwidth]{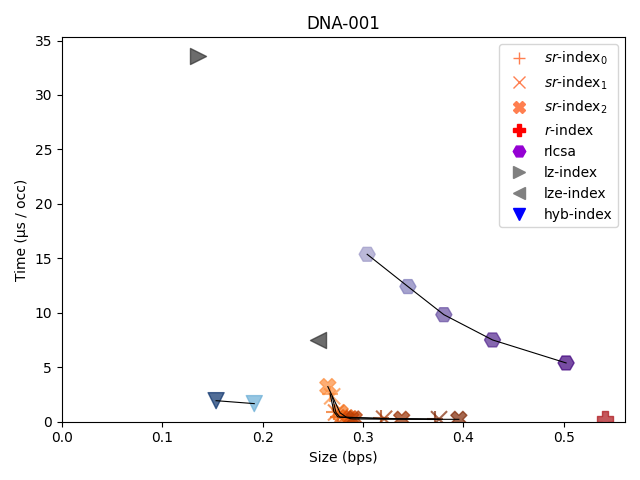}
\includegraphics[width=0.49\textwidth]{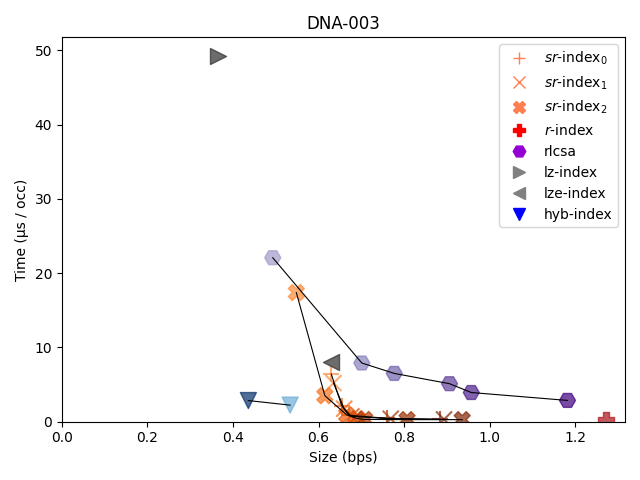}
\includegraphics[width=0.49\textwidth]{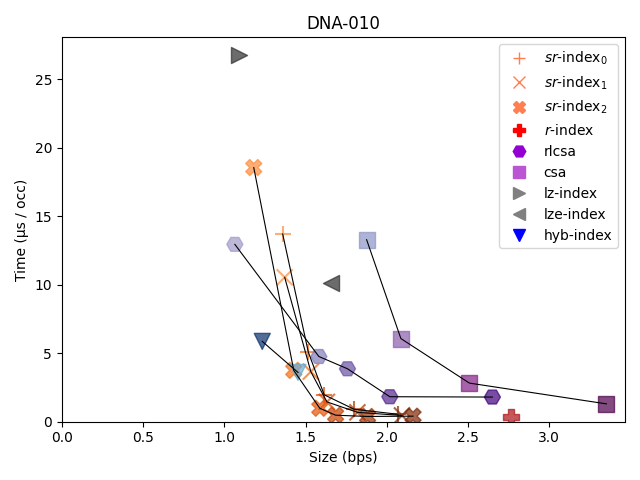}
\includegraphics[width=0.49\textwidth]{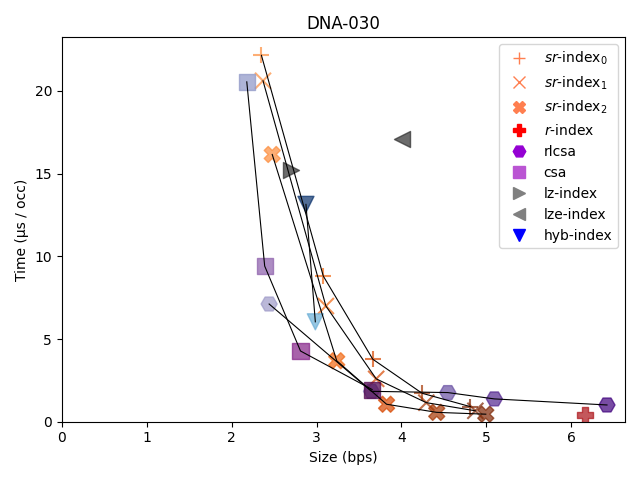}
\includegraphics[width=0.49\textwidth]{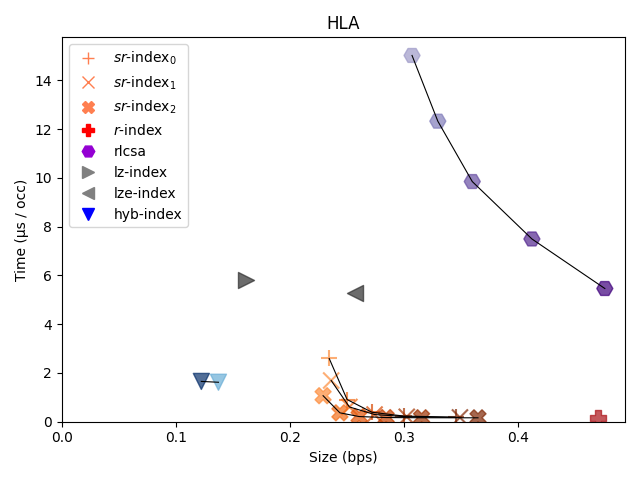}
\includegraphics[width=0.49\textwidth]{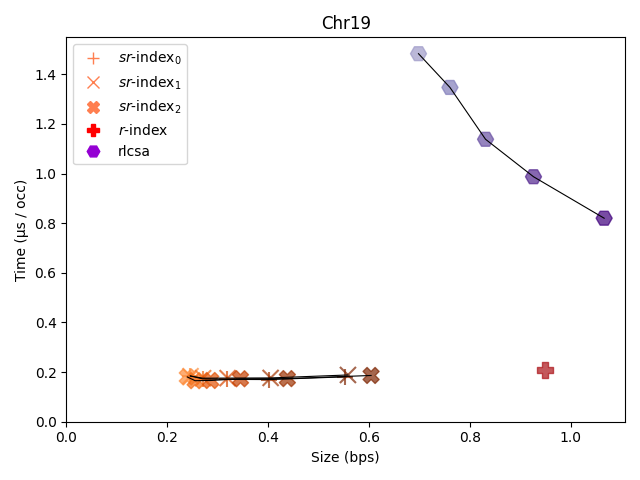}
\includegraphics[width=0.49\textwidth]{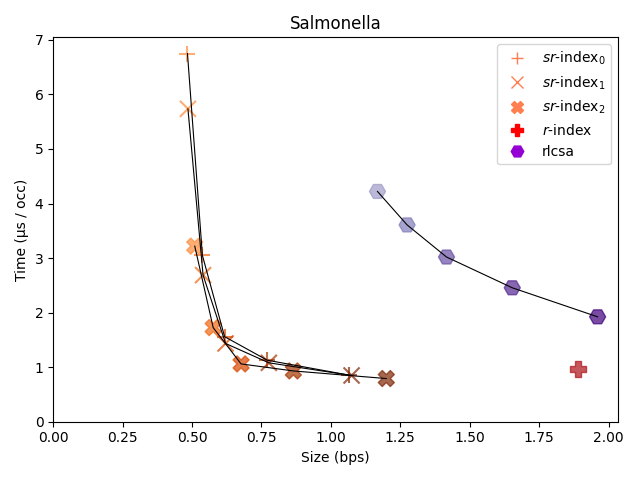}
\includegraphics[width=0.49\textwidth]{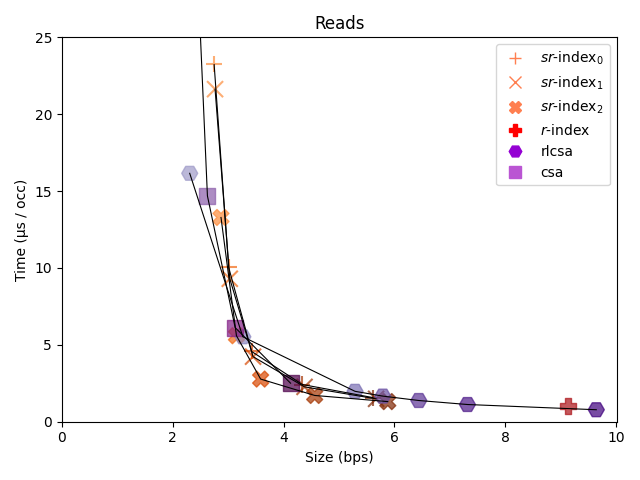}
\caption{Space-time tradeoffs for the synthetic and real DNA datasets.}
\label{fig:exp2}
\end{figure}

A first conclusion is that $\SRIdx_2$ always dominates $\SRIdx_0$ and $\SRIdx_1$, so we will refer to it simply as $\SRIdx$ from now on. The plots show that the extra information we associate to the samples makes a modest difference in space, while time improves considerably.
This $\SRIdx$ can be almost as fast as the $\RIdx$, and an order of magnitude faster than all the others, while using $1.5$--$4.0$ less space than the $\RIdx$. Therefore, as promised, we are able to remove a significant degree of redundancy in the $\RIdx$ without affecting its outstanding time performance.

In all the PizzaChili collections, the $\SRIdx$ dominates almost every other index, outperforming them both in time and space. The only other index on the Pareto curve is the $\HybridIdx$, which can use as little as a half of the space of the sweet spot of the $\SRIdx$, but still at the price of being an order of magnitude slower. This holds even on \textsf{escherichia}, where $n/r$ is well below $10$, and both the \textsf{rlcsa} and the \textsf{csa} become closer to the $\SRIdx$.

In general, in all the collections with sufficient repetitiveness, say $n/r$ over 25, the $\SRIdx$ sharply dominates as described.
As repetitiveness decreases, with $n/r$ reaching around 10, the \textsf{rlcsa} and the \textsf{csa} approach the $\SRIdx$ and outperform every other repetitiveness-aware index, as expected. This happens on \textsf{escherichia} (as mentioned) and \textsf{Reads} (where the $\SRIdx$, the \textsf{rlcsa}, and the \textsf{csa} behave similarly). This is also the case on the least repetitive synthetic DNA collection, \textsf{DNA-030}, where the mutation rate reaches 3\%. In this collection, the repetitiveness-unaware \textsf{csa} largely dominates all the space-time map.

We expected the $\SRIdx$ to have a bigger advantage over the $\RIdx$ on the \textsf{HLA} dataset because its oversampling is concentrated, but the results are similar to those on randomly mutated DNA with about the same $n/r$ value (\textsf{DNA-001}). In general, the bps used by the $\SRIdx$ can be roughly predicted from $n/r$; for example the sweet spot often uses around $40r$ total bits, although it takes $20r$--$30r$ bits in some cases. The $\RIdx$ uses $70r$--$90r$ bits.

The bigger collections (\textsf{Chr19}, \textsf{Salmonella}, \textsf{Reads}), on which we could build the $\BWT$-related indexes only, show that the same observed trends scale to gigabyte-sized collections of various repetitiveness levels.

\section{Conclusions} \label{sec:conclusion}

We have introduced the $\SRIdx$, an $\RIdx$ variant that solves the problem of its relatively bloated space while retaining its high search performance. The $\SRIdx$ is orders of magnitude faster than the other repetitiveness-aware indexes, while outperforming most of them in space as well. It matches the time performance of the $\RIdx$ while using $1.5$--$4.0$ less space.

Unlike the $\RIdx$, the $\SRIdx$ uses little space even in milder repetitiveness scenarios, which makes it usable in a wider range of bioinformatic applications.
For example, it uses 0.25--0.60 bits per symbol (bps) while reporting each occurrence within a microsecond on gigabyte-sized human and bacterial  genomes, where the original $\RIdx$ uses 0.95--1.90 bps. In general, the $\SRIdx$ outperforms classic compressed indexes on collections with repetitiveness levels $n/r$ over as little as $7$ in some cases, though in general it is reached by repetitiveness-unaware indexes when $n/r$ approaches $10$, which is equivalent to a DNA mutation rate around 3\%.

Compared to the $\RLFMIdx$, which for pattern searching is dominated by the $\SRIdx$, the former can use its regular text sampling to compute any entry of the suffix array or its inverse in time proportional to the sampling step $\vs$. Obtaining an analogous result on the $\SRIdx$, for example to implement compressed suffix trees, is still a challenge. Other proposals for accessing the suffix array faster than the $\RLFMIdx$ \cite{GNF14,PZ20} illustrate this difficulty: they require even more space than the $\RIdx$.



\typeout{}
\bibliography{bibliography}


\end{document}